 \newcommand{\ba}{\begin{eqnarray}}
\newcommand{\ea}{\end{eqnarray}}
\newcommand{\ketbra}[2]{|#1\rangle \langle #2|}
\newcommand{\simm}{\hspace{-.1cm}\sim\hspace{-.1cm}}
\newcommand{\nsimm}{\hspace{-.1cm}\nsim\hspace{-.1cm}}
\newcommand{\suc}{\hspace{-.1cm}\succ\hspace{-.1cm}}
\newcommand{\ansimm}{\hspace{-.1cm}\underset{asy}{\nsim}\hspace{-.1cm}}
\newtheorem{theorem}{Theorem}
\newtheorem{corollary}{Corollary}
\newtheorem{proposition}{Proposition}
\newtheorem{Lemma}{Lemma}
\begin{document}


\title{Quantum Nonlocality: Multi-copy Resource Inter-convertibility \& Their Asymptotic Inequivalence}

\author{Subhendu B. Ghosh}
\affiliation{Physics and Applied Mathematics Unit, 203 B.T. Road Indian Statistical Institute Kolkata, 700108}

\author{Snehasish Roy Chowdhury}
\affiliation{Physics and Applied Mathematics Unit, 203 B.T. Road Indian Statistical Institute Kolkata, 700108}

\author{Guruprasad Kar}
\affiliation{Physics and Applied Mathematics Unit, 203 B.T. Road Indian Statistical Institute Kolkata, 700108}

\author{Arup Roy}
\affiliation{Department of Physics, A B N Seal College Cooch Behar, West Bengal 736101, India}

\author{Tamal Guha}
\affiliation{QICI Quantum Information and Computation Initiative, Department of Computer Science, The University of Hong Kong, Pokfulam Road, Hong Kong}

\author{Manik Banik}
\affiliation{Department of Physics of Complex Systems, S.N. Bose National Center for Basic Sciences, Block JD, Sector III, Salt Lake, Kolkata 700106, India.}

\begin{abstract}
Quantum nonlocality, pioneered in Bell's seminal work and subsequently verified through a series of experiments, has drawn substantial attention due to its practical applications in various protocols. Evaluating and comparing the extent of nonlocality within distinct quantum correlations holds significant practical relevance. Within the resource theoretic framework this can be achieved by assessing the inter-conversion rate among different nonlocal correlations under free local operations and shared randomness. In this study we, however, present instances of quantum nonlocal correlations that are incomparable in the strongest sense. Specifically, when starting with an arbitrary many copies of one nonlocal correlation, it becomes impossible to obtain even a single copy of the other correlation, and this incomparability holds in both directions. Such incomparable quantum correlations can be obtained even in the simplest Bell scenario, which involves two parties, each having two dichotomic measurements setups. Notably, there exist an uncountable number of such incomparable correlations. Our result challenges the notion of a `unique gold coin', often referred to as the `maximally resourceful state', within the framework of the resource theory of quantum nonlocality. To this end, we provide examples of isotropic quantum correlations that cannot be distilled up-to Tsirelson point,and thus partially answer a long standing open question in nonlocality distillation.
\end{abstract}

\maketitle
{\it Introduction.--} John S. Bell's groundbreaking work in 1964 represented one of the most significant departures from classical worldviews within the realm of quantum physics \cite{Bell1964}. His work challenged the deeply ingrained concept of `local causality' \cite{Bell1966, Mermin1993, Brunner2014}. Bell devised an elegant method to establish the nonlocal behavior of input-output correlations obtained in experiments involving multipartite quantum systems. Subsequently, several milestone experiments with entangled photons provided empirical evidence for quantum nonlocality \cite{Freedman1972, Aspect1981, Aspect1982, Aspect1982(1), Zukowski1993}, thereby settling a long-standing debate on the foundations of quantum physics \cite{Einstein1935, Bohr1935, Schrodinger1935}. With the advent of quantum information science, quantum nonlocality has emerged as a valuable resource for various device-independent protocols \cite{Scaran2012, Ekert1991, Barrett2005a, Acin2006, Pironio2010, Colbeck2012, Chaturvedi2015, Mukherjee2015, Brunner2013, Pappa2015, Roy2016, Banik2019}.

Quantifying the extent of nonlocality in correlations obtained from entangled quantum systems, thus, holds significant practical importance. The framework of quantum resource theories (QRTs) provides an elegant approach to investigate this question \cite{Chitambar2019}. A QRT begins by identifying a set of constrained operations called `free operations' and a subset of states referred to as `free states'. States falling outside this category are called `resourceful states' or simply `resources'. A quintessential example of a QRT is the theory of quantum entanglement, where multipartite systems prepared in non-separable states are considered resources under local operation and classical communication (LOCC) \cite{Horodecki2009}. While exploring nonlocality, the focus shifts from multipartite quantum states to multipartite input-output correlations among distant parties. Of particular interest is the broad spectrum of correlations known as no-signalling (NS) correlations, where communication between the parties is strictly prohibited. Notably, within the realm of classical physics, correlations adhere to a more restrictive framework known as Bell-local correlations, which are encompassed within the NS set. Correlations that transcend this local boundary are termed nonlocal correlations. Remarkably, entangled quantum states are capable of producing such nonlocal correlations, which serve as crucial resources for various protocols \cite{Brunner2014}. Within the framework of resource theory, nonlocality is regarded as a resource, subject to the constraints of free operations comprising local operations and shared randomness (LOSR). More generally the set of free operations consists of wirings and classical communication prior to the inputs (WCCPI) \cite{Gallego2012}. 

Once the free operations, free states, and resourceful states are identified in a resource theory, the next crucial question  is to compare the resources in different states. One pertinent approach is to determine the optimal rates at which these states can be successfully inter-converted under free operations. In this study, we investigate the concept of resource inter-convertibility among quantum nonlocal correlations. Firstly, we observe that even in the simplest Bell scenario, involving two spatially separated parties, each conducting two dichotomic measurements, there are uncountably many quantum nonlocal correlations that cannot be freely converted into each other at the single-copy level, highlighting the incomparability of these resources. We then investigate this question by considering asymptotically many copies of these resources. In doing so, we establish an even more striking result. We prove that there exist quantum nonlocal correlations that are inequivalent in the strongest sense, as they are not inter-convertible even under asymptotic manipulation. More particularly, there are quantum correlations $P_q$ and $P^\prime_q$ such that, starting with an arbitrary number of copies of $P_q$, it is not possible to obtain even a single copy of $P^\prime_q$ under the free operation of LOSR, and vice versa. This finding distinguishes the theory of quantum nonlocality from the theory of quantum entanglement. In the case of bipartite entanglement, asymptotic state inter-conversion gives rise to the concepts of entanglement distillation and entanglement cost \cite{Bennett1996(2)} (see also \cite{Hayashi2003}). Consequently, the notion of a maximally entangled state, a `unique gold coin,' emerges. Our result, however, establishes that quantum nonlocal correlations lack the concept of such a unique gold coin, thereby resulting significant implication in the study of nonlocality distillation \cite{Forster2009, Brunner2009, Ebbe2014, Beigi2015, Brito2019, Eftaxias2022, Naik2023}.

{\it Preliminaries.--} The $\mathrm{nmk}$-Bell scenario consists of $n$ distant parties, each performing $m$ different $k$-outcome measurements on their respective subsystems. By repeating the experiments many times they produce a joint input-output correlation $P:=\{p(\Vec{a}|\Vec{x})\equiv p(a_1,\cdots,a_n|x_1,\cdots,x_n)~|~x_i\in\mathcal{X}_i,~a_i\in\mathcal{A}_i\}$, where $|\mathcal{X}_i|=m,~|\mathcal{A}_i|=k,~\forall~i\in\{1,\cdots n\}$. The joint probabilities satisfy the no-signaling (NS) conditions that prohibit instantaneous information transfer among the distant parties. Set of all NS correlations forms a convex polytope $\mathcal{N}$ embedded in some $\mathbb{R}^N$ (the value of $N$ depends on $n,~m,$ and $k$). A correlation is called `Bell local' if it can be factorized as $p(\Vec{a}|\Vec{x})=\int_{\Lambda}d\lambda p(\lambda)\Pi_{i=1}^n p(a_i|x_i,\lambda)$, where $\lambda\in\Lambda$ is a classical variable shared among the parties, and $p(\lambda)$ is a probability density function over $\Lambda$ \cite{Brunner2014}. The set of local correlations forms a proper sub-polytope $\mathcal{L}$. A correlation is called quantum if it allows a quantum realization, {\it i.e.}, $p(\Vec{a}|\Vec{x})=\mbox{Tr}[(\otimes_{i=1}^n\pi^{a_i}_{x_i})\ket{\psi}\bra{\psi}]$, where $\ket{\psi}\in\otimes_{i=1}^n\mathcal{H}_i$ and $\pi^{a_i}_{x_i}\in\mathcal{P}(\mathcal{H}_i)$ with $\sum_{a_i}\pi^{a_i}_{x_i}=\mathbf{I}_{\mathcal{H}_i}$. Dimensions of the Hilbert spaces are finite, {\it i.e.}, $\mbox{dim}(\mathcal{H}_i)<\infty$, and $\mathcal{P}(\star)$ denotes the sets of positive operators acting on the respective Hilbert spaces (see \cite{Cabello2023} for other possible mathematical models for physical correlations). Set of all quantum correlations $\mathcal{Q}$ forms a convex set lying strictly in between the local and NS polytopes, {\it i.e.}, $\mathcal{L}\subsetneq\mathcal{Q}\subsetneq\mathcal{N}$. For the $222$-Bell scenario the polytope $\mathcal{N}$, embedded in $\mathbb{R}^8$, has $16$ local deterministic vertices and $8$ non-local vertices \cite{Barrett2005(0)}: 
\begin{subequations}
\begin{align}
&P_{\text{L}}^{\alpha\beta\gamma\eta}\equiv\left\{p(ab|xy):=\delta_{(a,\alpha x\oplus\beta)}~\delta_{(b,\gamma y\oplus \eta)}\right\};\label{L}\\ 
&P_{\text{NL}}^{\alpha\beta\gamma}\equiv\left\{p(ab|xy):= 1/2~\delta_{(a \oplus b,~xy\oplus\alpha
x\oplus\beta y\oplus\gamma)}\right\};\label{NL}
\end{align}
\end{subequations}
with $\alpha,\beta,\gamma,\eta\in\left\{0,1\right\}$, whereas the polytope $\mathcal{L}$ is the convex hull of local deterministic vertices. The quantum set $\mathcal{Q}$ forms a convex set with uncountably many nonlocal extreme points, each having quantum realization with two-qubit pure entangled state and local projective measurements \cite{Masanes2006}.

In a resource theory, two resources $R_1$ and $R_2$ will be called equivalent, symbolized as $R_1\simm R_2$, if $R_2$ can be obtained from $R_1$ under the free operations and the vice-verse. Collection of equivalent resources form a equivalent class. On the other hand, $R_1\suc R_2$ puts an ordering ``$R_1$ is more resourceful than $R_2$" in the sense that $R_2$ can be obtained from $R_1$ freely but not the other-way around. Finally, $R_1\nsimm R_2$ denotes that neither $R_2$ can be freely obtained from $R_1$ nor $R_1$ from $R_2$. In such a case resources $R_1$ and $R_2$ are incomparable, and hence they are treated as inequivalent resources. For instance, in resource theory of nonlocality, correlations in $\mathcal{L}$ are free states, while those lying in $\mathcal{N}\setminus\mathcal{L}$ are the resources \cite{Vicente2014}. For the $222$ scenario, the extremal nonlocal correlations of Eq.(\ref{NL}) form an equivalence class as they are inter-convertible under local reversible operations \cite{Barrett2005(0)}.  

{\it Results.--} In this work, we consider a physically motivated variant of nonlocality theory which we call the resource theory of quantum nonlocality (RTQN). All the correlations allowed in this theory are quantum-realizable, {\it i.e.}, the resources belong to the set $\mathcal{Q}\setminus\mathcal{L}$. Interestingly, there are extreme points of $\mathcal{Q}$ that are inequivalent under one-copy manipulation -- in-fact there are uncountably many of them (see Proposition 28 in \cite{Wolfe2020}). Equivalent classes of these extreme correlations are discussed in the Supplemental \cite{Supple}. In a generic resource theory, it is quite possible that a resource $R_2$ cannot be obtained from one copy of another resource $R_1$, but can be obtained from its $n$ copies. The symbol $R_1^{\otimes n}\hspace{-.2cm}\nrightarrow\hspace{-.1cm} R_2$ denotes that a single copy of $R_2$ cannot be obtained from $n$ copy of $R_1$ under the allowed free operations. This leads to a notion of the strongest form of inequivalence, namely the asymptotically inequivalence between two resources: \vspace{-.2cm}
\begin{align}\label{asym}
R_1\ansimm R_2,~\mbox{whenever}~R_1^{\otimes n}\hspace{-.2cm}\nrightarrow\hspace{-.1cm} R_2~\&~R_2^{\otimes n}\hspace{-.2cm}\nrightarrow\hspace{-.1cm} R_1,~\forall~n\in\mathbb{N}.
\end{align}
For instance, in bipartite entanglement theory, single-copy inter-convertibility of pure entangled state is completely determined through majorization criteria \cite{Nielsen1999}. While there are pure entangled states that are incomparable according to this criteria, in asymptotic setup all of them become comparable \cite{Bennett1996(2)}. Consequently, the notion of maximally entangled state arises, which for $(\mathbb{C}^d)^{\otimes2}$ system reads as $\ket{\phi^+_d}:=(\sum_{i=0}^{d-1}\ket{ii})/\sqrt{d}$, where $\{\ket{i}\}_{i=0}^{d-1}$ is the computational basis of $\mathbb{C}^d$. In nonlocality scenario, a large class of $\mathrm{nm2}$ NS correlations can be simulated with multiple copies of the $222$ nonlocal vertex, which otherwise are not possible with a single copy \cite{Barrett2005,Nick2005,Sidhardh2023}.  

Therefore, naturally the question arises whether an ordering relation can be reestablished among the extremal quantum correlations under asymptotic manipulation that otherwise are incomparable at the single-copy level. In this work we will, however, show that there are quantum nonlocal correlations that are incomparable even in asymptotic setup. To this aim, we first consider two specific nonlocal extreme points -- the Tsirelson correlation $P_T$ that saturates the maximum quantum value $2\sqrt{2}$ of the Clauser-Horne-Shimony-Holt (CHSH) expression $\mathcal{CHSH}:=\langle X_0Y_0\rangle+\langle X_0Y_1\rangle+\langle X_1Y_0\rangle-\langle X_1Y_1\rangle$ \cite{Clauser1969,Cirelson1980}, and the Hardy correlation $P_H$ that yields the maximum quantum success $(5\sqrt{5}-11)/2\approx0.09$ for the Hardy's argument \cite{Hardy1992}. Quantum realizations for the correlations $P_T$ and $P_H$ are given by
\begin{subequations}
\begin{align}\label{PT}
P_T&\overset{Q}{=}\left\{\!\begin{aligned}
\ket{\phi^+_2}=(\ket{00}+\ket{11})/\sqrt{2};~X_0=\sigma_z,\\X_1=\sigma_x;Y_j=\left(\sigma_z+(-1)^j\sigma_x\right)/\sqrt{2}
\end{aligned}\right\};\\\label{PH}
P_H&\overset{Q}{=}\left\{\!\begin{aligned}
\ket{\psi_H}=a(\ket{01}+\ket{10})+\sqrt{1-2a^2}\ket{11},\\K_0=\sigma_z,~~K_1=\ket{\alpha}\bra{\alpha}-\ket{\alpha^\perp}\bra{\alpha^\perp}~~~
\end{aligned}\right\};
\end{align}
\end{subequations}
where $\ket{\alpha}:=(\sqrt{1-2a^2}\ket{0}-a\ket{1})/\sqrt{1-a^2},~a:=\left((5-\sqrt{3})/2\right)^{1/2}$, and $K\in\{X,Y\}$. To prove the asymptotic inequivalence of $P_T$ and $P_H$ we start by recalling a simple mathematical Lemma from \cite{Jozsa1994} (for the sake of completeness we discuss the proof in Supplementary material \cite{Supple}). 
\begin{Lemma}\label{lemma1}
$(X\otimes Y)\ket{\tilde{\phi}^+_d}=\left(\mathbf{I}_d\otimes YX^{\mathrm{T}}\right) \ket{\tilde{\phi}^+_d}$, where $X,Y\in\mathcal{B}(\mathbb{C}^d)$ and $\mathbf{I}_d$ is the identity operator on $\mathbb{C}^d$.
\end{Lemma}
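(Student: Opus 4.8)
The plan is to deduce the identity from the standard ``transpose trick'' (ricochet property) of the maximally entangled vector. Since the claimed equality is unaffected by rescaling, I take $\ket{\tilde{\phi}^+_d}=\sum_{i=0}^{d-1}\ket{ii}$ and read the transpose $(\cdot)^{\mathrm{T}}$ with respect to the computational basis $\{\ket{i}\}_{i=0}^{d-1}$. The only nontrivial input is the one-sided statement $(M\otimes\mathbf{I}_d)\ket{\tilde{\phi}^+_d}=(\mathbf{I}_d\otimes M^{\mathrm{T}})\ket{\tilde{\phi}^+_d}$ for an arbitrary $M\in\mathcal{B}(\mathbb{C}^d)$. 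To prove it I would expand $M=\sum_{j,k}M_{jk}\ketbra{j}{k}$, compute $(M\otimes\mathbf{I}_d)\sum_{k}\ket{kk}=\sum_{j,k}M_{jk}\,\ket{j}\otimes\ket{k}$, and then, after relabelling the summation, rewrite this as $\sum_{j}\ket{j}\otimes\big(\sum_{k}M_{jk}\ket{k}\big)=\sum_{j}\ket{j}\otimes(M^{\mathrm{T}}\ket{j})=(\mathbf{I}_d\otimes M^{\mathrm{T}})\ket{\tilde{\phi}^+_d}$, using $(M^{\mathrm{T}})_{kj}=M_{jk}$.

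Granting the one-sided identity, the Lemma follows by factoring $X\otimes Y=(\mathbf{I}_d\otimes Y)(X\otimes\mathbf{I}_d)$, moving the $X$ across to the second tensor slot as $X^{\mathrm{T}}$, and then fusing the two operators acting there:
\begin{align*}
(X\otimes Y)\ket{\tilde{\phi}^+_d}&=(\mathbf{I}_d\otimes Y)(X\otimes\mathbf{I}_d)\ket{\tilde{\phi}^+_d}\\
&=(\mathbf{I}_d\otimes Y)(\mathbf{I}_d\otimes X^{\mathrm{T}})\ket{\tilde{\phi}^+_d}\\
&=(\mathbf{I}_d\otimes YX^{\mathrm{T}})\ket{\tilde{\phi}^+_d}.
\end{align*}

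I do not expect any genuine obstacle here: the whole argument is index bookkeeping. The one point that deserves care is that the transpose is basis-dependent and must be taken in the same orthonormal basis used to define $\ket{\tilde{\phi}^+_d}$, since a change of basis would replace $X^{\mathrm{T}}$ by a unitarily conjugated operator. The same equality holds verbatim for the normalized vector $\ket{\phi^+_d}=\ket{\tilde{\phi}^+_d}/\sqrt{d}$, as both sides are homogeneous in the state. In the main argument this Lemma is the tool that lets us transport the measurement operators from one wing of the shared maximally entangled state to the other, thereby recasting the quantum realization of $P_T$ in the form needed for the subsequent asymptotic analysis.
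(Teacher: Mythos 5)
Your proof is correct and rests on the same elementary index computation in the computational basis as the paper's; the only (cosmetic) difference is that you first isolate the one-sided identity $(M\otimes\mathbf{I}_d)\ket{\tilde{\phi}^+_d}=(\mathbf{I}_d\otimes M^{\mathrm{T}})\ket{\tilde{\phi}^+_d}$ and then compose via $X\otimes Y=(\mathbf{I}_d\otimes Y)(X\otimes\mathbf{I}_d)$, whereas the paper expands both sides of the two-operator statement directly and compares coefficients. Your remark that the transpose must be taken in the same basis used to define $\ket{\tilde{\phi}^+_d}$ is a correct and worthwhile caveat.
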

Here, $\ket{\tilde{\psi}}$ denotes the unnormalized vector corresponds to the state $\ket{\psi}$, $\mathcal{B}(\star)$ denotes the set of bounded operators acting on the corresponding Hilbert space, and `$\mathrm{T}$' denotes transposition in computational basis. We now proceed to prove our first {\it no-go} result on multi-copy manipulation of quantum nonlocal correlations. 
\begin{proposition}\label{prop1}
Even a single copy of the correlation $P_H\in\mathcal{Q}$ cannot be obtained from arbitrary many copies of the correlation $P_T\in\mathcal{Q}$ under LOSR, {\it i.e.}, $P_T^{\otimes n}\hspace{-.2cm}\nrightarrow\hspace{-.1cm} P_H,~\forall~n\in\mathbb{N}$.
\end{proposition}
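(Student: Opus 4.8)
The plan is to show that every correlation reachable from $P_T^{\otimes n}$ under LOSR remains ``CHSH-rigid'' in a way incompatible with $P_H$, exploiting the algebraic structure revealed by Lemma~\ref{lemma1}. First I would recall the self-testing (rigidity) property of the Tsirelson correlation: any quantum realization of $P_T$ is, up to local isometry, the canonical one of Eq.~(\ref{PT}), with a maximally entangled two-qubit state and anticommuting observables on each side. Hence $n$ copies $P_T^{\otimes n}$ are realized by $\ket{\phi^+_2}^{\otimes n}$, which is itself (locally isometric to) a maximally entangled state $\ket{\tilde\phi^+_D}$ with $D=2^n$. The key observation is that LOSR processing of such a resource — local pre- and post-processing of inputs and outputs, together with shared randomness — keeps the global state maximally entangled on the support of each side's local operation, and Lemma~\ref{lemma1} lets me transfer Bob's effective measurement operators onto Alice's side as transposes acting on the same maximally entangled vector.

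Next I would extract the property of $P_H$ that must be violated. The natural candidate is that $P_H$ exhibits the Hardy paradox: there exist input pairs with $p(00|00)>0$ while $p(00|01)=p(00|10)=0$ and $p(11|11)=0$ (or the appropriate zero-probability pattern), and crucially $P_H$ is \emph{not} a maximally-entangled-state correlation — indeed it is known that Hardy's argument cannot be run with the maximally entangled state, and more strongly that any quantum correlation with nonzero Hardy success is incompatible with being produced from a maximally entangled state by local operations. I would make this precise by arguing that correlations obtainable from $\ket{\tilde\phi^+_D}$ by LOSR form a restricted family: using Lemma~\ref{lemma1}, for any such correlation the bipartite statistics are governed by operators of the form $M^A_{x,a}\otimes \mathbf{I}$ acting on $\ket{\tilde\phi^+_D}$, so $p(ab|xy)=\tfrac1D\,\mathrm{Tr}[M^A_{x,a}(M^B_{y,b})^{\mathrm T}]$, and the resulting set satisfies a symmetry/positivity constraint (a ``maximally mixed marginal'' condition: $\sum_b p(ab|xy)=\tfrac1D\mathrm{Tr}[M^A_{x,a}]$ is independent of $y$, and similarly for the other marginal) that, combined with the Hardy zeros, forces the Hardy success to vanish.

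The main obstacle — and the step needing the most care — is closing the loop from ``$P_T^{\otimes n}$ is realized by a maximally entangled state'' to ``every LOSR child of $P_T^{\otimes n}$ is realized by a maximally entangled state (possibly of smaller dimension), hence cannot be $P_H$.'' LOSR allows Alice and Bob to attach local ancillas, apply local channels conditioned on shared randomness, and wire the new inputs/outputs; a priori the post-processed state need not be maximally entangled. The resolution I would pursue: a convex mixture (over the shared randomness $\lambda$) of maximally-entangled-marginal correlations still has each conditional branch with locally maximally mixed marginals only if the local channels are unital on the relevant support — so instead I would argue directly at the level of correlations, showing that the zero-probability pattern of $P_H$ together with $p(00|00)>0$ cannot be written as $\int d\lambda\, p(\lambda)\, p_\lambda(ab|xy)$ with each $p_\lambda$ a ``transpose-form'' correlation $\tfrac{1}{D}\mathrm{Tr}[M^A_{x,a}(\lambda)(M^B_{y,b}(\lambda))^{\mathrm T}]$, because a Hardy zero $p(00|01)=0$ forces $p_\lambda(00|01)=0$ for a.e.\ $\lambda$, i.e.\ $\mathrm{Tr}[M^A_{0,0}(\lambda)(M^B_{1,0}(\lambda))^{\mathrm T}]=0$ with both operators positive, hence their supports are orthogonal; propagating the three Hardy zeros then forces $\mathrm{Tr}[M^A_{0,0}(\lambda)(M^B_{0,0}(\lambda))^{\mathrm T}]=0$ as well, contradicting $p(00|00)>0$. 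I expect the subtle point to be verifying that the Hardy-zero chain really does close under the maximally-mixed-marginal structure (this is where the specific algebraic feature of $\ket{\phi^+_2}^{\otimes n}$, rather than an arbitrary entangled state, is essential), and I would isolate that combinatorial implication as the technical heart of the argument.
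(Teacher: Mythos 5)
Your proposal is correct and follows essentially the same route as the paper's proof: realize $P_T^{\otimes n}$ by $\ket{\phi^+_{2^n}}$, reduce every LOSR child to two dichotomic measurements on that state, use Lemma~\ref{lemma1} to write $p(ab|X_iY_j)\propto\mathrm{Tr}\bigl[(X_i^a)^{\mathrm T}Y_j^b\bigr]$, and close the chain of Hardy zeros via support orthogonality to contradict $p(00|X_0Y_0)>0$. The only differences are cosmetic: the self-testing of $P_T$ you invoke is not needed in this direction (a single fixed realization of $P_T^{\otimes n}$ suffices), and your explicit branch-by-branch treatment of the shared randomness plays the role of the paper's reduction to projective measurements.
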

\begin{proof}
Note that $n$-copies of the correlation $P_T$ can be obtained from the state $\ket{\phi^+_2}^{\otimes n}\equiv\ket{\phi^+_{2^n}}$. On the other hand, any $222$ correlation obtained through LOSR protocols applied on $P_T^{\otimes n}$ can also be obtained by performing two dichotomic measurements on the each local part of the state $\ket{\phi^+_{2^n}}$ \cite{Dukaric2008,Lang2014}. Therefore, to prove the present proposition, it is sufficient to show that the state $\ket{\phi^+_2}^{\otimes n}$ does not exhibit Hardy's nonlocality. Furthermore, we can restrict ourselves to projective measurements, since a dichotomic POVM can always be represent as probabilistic mixture of projective measurements \cite{Masanes2006}. Recall that the Hardy nonlocality argument reads as \cite{Hardy1992}:
\begin{align*}
p(00|X_0Y_0)&=q>0,\\
p(00|X_0Y_1)=p(00|X_1Y_0)&=p(11|X_1Y_1)=0.
\end{align*}
Applying Lemma \ref{lemma1} on $\phi^+_{2^n}\equiv\ket{\phi^+_{2^n}}\bra{\phi^+_{2^n}}$, we have 
\begin{align*}
p(ab|X_iY_j)=&\frac{1}{2^n}\text{Tr}\left[\left\{\mathbf{I}_{2^n}\otimes Y_j^b\left(X_i^a\right)^{\mathrm{T}}\right\}\phi^+_{2^n}\right]\approx\text{Tr}\left(Y_j^b\bar{X}_i^a\right).
\end{align*}
where, $X_i^a(Y_j^b)$ be the projector corresponding to the outcome $a(b)$ of measurement $X_i(Y_j)$, $\bar{X}_i^a:=(X_i^a)^{\mathrm{T}}$, $i,j\in\{0,1\}$, and `$\approx$' denotes the unnormalized probability value. Plugging these expressions in Hardy's argument we get
\begin{subequations}
\begin{align}
\label{h1}
\text{Tr}\left(Y_0^0\bar{X}_0^0\right)>0&\implies\text{Supp}\left(Y_0^0\right)\cap\text{Supp}\left(\bar{X}_0^0\right)\neq\emptyset,\\ \label{h2}
\text{Tr}\left(Y_1^0\bar{X}_0^0\right)=0&\implies\text{Supp}\left(\bar{X}_0^0\right)\subseteq\text{Supp}\left(Y_1^1\right),\\ \label{h3}
\text{Tr}\left(Y_0^0\bar{X}_1^0\right)=0&\implies\text{Supp}\left(\bar{X}_1^0\right)\subseteq\text{Supp}\left(Y_0^1\right),\\ \label{h4}
\text{Tr}\left(Y_1^1\bar{X}_1^1\right)=0&\implies\text{Supp}\left(Y_1^1\right)\subseteq\text{Supp}\left(\bar{X}_1^0\right),
\end{align}
\end{subequations}
where, $\text{Supp}(Z)\subseteq\mathbb{C}^{2^n}$ denotes the support of the projector $Z$. Eqs.(\ref{h2}), (\ref{h3}), \& (\ref{h4}) imply $\text{Supp}\left(\bar{X}_0^0\right)\subseteq\text{Supp}\left(Y_0^1\right)$. On the other hand, $Y_0$ being a projective measurement implies  $\text{Supp}\left(Y_0^1\right)\cap\text{Supp}\left(Y_0^0\right)=\emptyset$, which thus forbids the condition (\ref{h1}) to be held true. This  completes the proof.
\end{proof}
Important to note that Proposition \ref{prop1} holds true even if the correlation $P_H\in\mathcal{Q}$ is replaced by other Hardy's correlations $P_h\in\mathcal{Q}$ arising from other two-qubit non-maximally entangled states \cite{Goldstein1994,Kar1997,Rai2022}, where the success probability is less than the quantum optimal value, {\it i.e.}, $0<p_h(00|X_0Y_0)<p_H(00|X_0Y_0)=(5\sqrt{5}-11)/2$. We now proceed to address the reverse-way inter-conversion of the resources appeared in Proposition \ref{prop1}. Furthermore, it is also important to note that this Proposition as well as the other results obtained in this work also holds true if we consider the set of more general free operations WCCPI, instead of LOSR (argument provided in \cite{Supple}).  
\begin{proposition}\label{prop2}
Even a single copy of the correlation $P_T\in\mathcal{Q}$ cannot be obtained from arbitrary many copies of the correlation $P_H\in\mathcal{Q}$ under LOSR, {\it i.e.}, $P_H^{\otimes n}\hspace{-.2cm}\nrightarrow\hspace{-.1cm} P_T,~\forall~n\in\mathbb{N}$.
\end{proposition}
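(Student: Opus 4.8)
The plan is to follow the blueprint of Proposition~\ref{prop1}. Since $n$ copies of $P_H$ come from local measurements on $\ket{\psi_H}^{\otimes n}\in\mathbb{C}^{2^n}\otimes\mathbb{C}^{2^n}$, the same wiring‑absorption fact \cite{Dukaric2008,Lang2014} used above guarantees that every $222$ correlation reachable from $P_H^{\otimes n}$ under LOSR is produced by a pair of dichotomic measurements on each local part of $\ket{\psi_H}^{\otimes n}$. As a dichotomic POVM is a convex mixture of projective dichotomic measurements, and the CHSH functional is linear in the correlation with $2\sqrt2$ its global quantum maximum \cite{Cirelson1980}, it suffices to show that $\ket{\psi_H}^{\otimes n}$ cannot attain $\mathcal{CHSH}=2\sqrt2$ with projective $\pm1$ observables $X_0,X_1$ on Alice's $\mathbb{C}^{2^n}$ and $Y_0,Y_1$ on Bob's $\mathbb{C}^{2^n}$; indeed $P_T$ being obtained from some mixture of such strategies would force each to saturate $2\sqrt2$.

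Next I would invoke Tsirelson's sum‑of‑squares identity: with $S_1:=\tfrac{X_0+X_1}{\sqrt2}\otimes\mathbf{I}-\mathbf{I}\otimes Y_0$ and $S_2:=\tfrac{X_0-X_1}{\sqrt2}\otimes\mathbf{I}-\mathbf{I}\otimes Y_1$ one has $2\sqrt2\,\mathbf{I}-\mathcal{B}=\tfrac{1}{\sqrt2}(S_1^2+S_2^2)\ge0$, where $\mathcal{B}$ is the CHSH operator. Hence $\langle\psi_H^{\otimes n}|\mathcal{B}|\psi_H^{\otimes n}\rangle=2\sqrt2$ forces $S_1\ket{\psi_H^{\otimes n}}=S_2\ket{\psi_H^{\otimes n}}=0$. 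Applying $\mathbf I\otimes Y_0$ to the first equation and $\mathbf I\otimes Y_1$ to the second, and using $Y_j^2=\mathbf I$, gives $\bigl(\tfrac{X_0\pm X_1}{\sqrt2}\bigr)^2\otimes\mathbf I\,\ket{\psi_H^{\otimes n}}=\ket{\psi_H^{\otimes n}}$, i.e. $\{X_0,X_1\}\otimes\mathbf I\,\ket{\psi_H^{\otimes n}}=0$. Because $\ket{\psi_H}$ is entangled it has Schmidt rank two, so $\ket{\psi_H}^{\otimes n}$ has full Schmidt rank $2^n$; writing $\ket{\psi_H^{\otimes n}}=(\rho_A^{1/2}\otimes\mathbf I)\ket{\tilde{\phi}^+_{2^n}}$ with $\rho_A$ the invertible reduced state and using Lemma~\ref{lemma1}, this collapses to $\{X_0,X_1\}=0$ on all of $\mathbb{C}^{2^n}$.

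Now $\tilde{X}_0:=\tfrac{X_0+X_1}{\sqrt2}$ and $\tilde{X}_1:=\tfrac{X_0-X_1}{\sqrt2}$ are again $\pm1$ observables and they anticommute. Feeding $S_1\ket{\psi_H^{\otimes n}}=S_2\ket{\psi_H^{\otimes n}}=0$ back through Lemma~\ref{lemma1} yields $Y_0=\rho_A^{1/2}\,\tilde{X}_0^{\mathrm T}\,\rho_A^{-1/2}$ and $Y_1=\rho_A^{1/2}\,\tilde{X}_1^{\mathrm T}\,\rho_A^{-1/2}$. Imposing that $Y_0,Y_1$ be Hermitian forces $[\rho_A,\tilde{X}_0^{\mathrm T}]=[\rho_A,\tilde{X}_1^{\mathrm T}]=0$, so the two anticommuting $\pm1$ involutions $\tilde{X}_0^{\mathrm T},\tilde{X}_1^{\mathrm T}$ are block‑diagonal with respect to the eigenspaces of $\rho_A$. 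But $\rho_A$ is the $n$‑fold tensor power of the single‑copy marginal of $\ket{\psi_H}$, whose two Schmidt weights are distinct precisely because $\ket{\psi_H}$ is \emph{not} maximally entangled; hence the extreme eigenvalue of $\rho_A$ has multiplicity $\binom{n}{0}=1$, and a one‑dimensional space cannot carry two anticommuting $\pm1$ involutions. This contradiction shows $\ket{\psi_H}^{\otimes n}$ never reaches the Tsirelson bound, so $P_H^{\otimes n}\nrightarrow P_T$ for every $n$; since only the non‑maximality of $\ket{\psi_H}$ was used, the argument covers the other quantum Hardy correlations as well.

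The hard part is the last paragraph: one must run Tsirelson's rigidity for arbitrary, possibly high‑rank, projective observables on $\mathbb{C}^{2^n}$, and — crucially — notice that it is the \emph{Hermiticity} of $Y_0,Y_1$, not $S_i\ket{\psi_H^{\otimes n}}=0$ alone, that pins $\tilde{X}_i^{\mathrm T}$ to commute with $\rho_A$ and thereby turns the spectral non‑degeneracy of the marginal (a one‑dimensional $\rho_A$‑eigenspace) into the obstruction; isolating exactly this point is the crux. An alternative, more black‑box route would replace the last two paragraphs by the CHSH self‑testing theorem — $\mathcal{CHSH}=2\sqrt2$ forces a local isometry taking $\ket{\psi_H}^{\otimes n}$ to $\ket{\phi^+_2}\otimes\ket{\xi}$ — together with the observation that a local isometry preserves the Schmidt spectrum while that of $\ket{\phi^+_2}\otimes\ket{\xi}$ has every value of even multiplicity, contradicting the multiplicity‑one Schmidt weight of $\ket{\psi_H}^{\otimes n}$.
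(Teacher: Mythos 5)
Your proof is correct, but your main line of argument is genuinely different from the paper's. The paper treats rigidity as a black box: since $P_T$ self-tests $\ket{\phi^+_2}$, producing $P_T$ from local measurements on $\ket{\psi_H}^{\otimes n}$ would yield a local isometry sending $\ket{\psi_H}^{\otimes n}$ (padded with ancillas) to $\ket{\phi^+_2}\otimes\ket{\zeta}$; comparing the spectra of the reduced states, every nonzero eigenvalue on the right-hand side is evenly degenerate, while the largest eigenvalue $s^n$ on the left is nondegenerate unless $s=1/2$ --- this is exactly the ``alternative, more black-box route'' you sketch in your closing paragraph. Your main argument instead reproves the needed piece of CHSH rigidity from scratch via the sum-of-squares identity: anticommutation of $X_0,X_1$ on the whole space (using invertibility of $\rho_A$, i.e.\ full Schmidt rank of $\ket{\psi_H}^{\otimes n}$), the relation $Y_i=\rho_A^{1/2}\,\tilde{X}_i^{\mathrm T}\,\rho_A^{-1/2}$ from Lemma~\ref{lemma1}, and the observation that Hermiticity of $Y_i$ forces $[\rho_A,\tilde{X}_i^{\mathrm T}]=0$, so the one-dimensional top eigenspace of $\rho_A=(\rho^{\psi_H}_A)^{\otimes n}$ would have to carry two anticommuting $\pm1$ involutions --- impossible. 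Both proofs ultimately rest on the same spectral fact, the multiplicity-one largest Schmidt weight of a non-maximally-entangled tensor power; yours is self-contained (no appeal to the self-testing literature) at the cost of more computation, and correctly isolates Hermiticity of $Y_i$ as the step that couples the observables to the spectrum of $\rho_A$, while the paper's version is shorter and, as exploited in Theorem~\ref{theo2}, applies verbatim to any correlation that self-tests $\ket{\phi^+_2}$, not only the CHSH-maximal point. Your reduction to projective strategies via linearity of the CHSH functional (each component of the convex decomposition must itself saturate $2\sqrt{2}$) is also sound and mirrors the reduction used in Proposition~\ref{prop1}.
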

\begin{proof}
The correlation $P_H$ has the quantum realization of Eq.(\ref{PH}). Therefore $n$-copy of the correlation $P_H^{\otimes n}$ can be obtained from the quantum state $\ket{\psi_H}_{AB}^{\otimes n}\in(\mathbb{C}_A^2\otimes\mathbb{C}_B^2)^{\otimes n}$. Since any $222$ correlation obtained through LOSR protocol on $P_H^{\otimes n}$ can be obtained by performing two dichotomic measurements on each part of the state  $\ket{\psi_H}_{AB}^{\otimes n}$, and since the correlation $P_T$ self-test the quantum state $\ket{\phi^+_2}$ \cite{Supic2020}, therefore contrary to the claim of the Proposition if we assume $P_H^{\otimes n}\hspace{-.2cm}\rightarrow\hspace{-.1cm} P_T$, then we must have
\begin{align}\label{nht0}
\Phi_A\otimes\Phi_B(\ket{\psi_H}_{AB}^{\otimes n})=\ket{\phi_2^+}_{A_1B_1}\ket{\zeta}_{A_2B_2},
\end{align}
where $\Phi_D:(\mathbb{C}^2)^{\otimes n}_D\mapsto\mathbb{C}^2_{D_1}\otimes\mathcal{H}_{D_2}$ be the isometric maps for $D\in\{A,B\}$, which can be thought as unitary by incorporating ancillary systems, {\it i.e.},
\begin{align}\label{nHT}
U_A\otimes U_B(\ket{\psi_H}_{AB}^{\otimes n}\ket{\eta}_{A^\prime}\ket{\eta}_{B^\prime})=\ket{\phi_2^+}_{A_1B_1}\ket{\zeta}_{A_2B_2},
\end{align}
where, the local ancillary states $\ket{\eta}_{A^\prime}~\&~\ket{\eta}_{B^\prime}$ are taken to make the input and output Hilbert spaces to be of same dimension. An immediate consequence is that the eigenvalues (EV) of the reduced part of the states on the left and right sides of Eq.(\ref{nHT}) must be same, {\it i.e.},
\begin{align}\label{evs}
\text{EV}\left\{(\rho^{\psi_H})^{\otimes n}_A\otimes\ket{\eta}_{A^\prime}\bra{\eta}\right\}\equiv \text{EV}\left\{(\mathbf{I}_2)_{A_1}/2\otimes\rho^\zeta_{A_2}\right\},
\end{align}
where $\rho_A^\chi$ denotes $A$ subsystem's marginal state of the composite state $\ket{\chi}_{AB}$. Let Schmidt coefficients of the state $\ket{\psi_H}$ be $\{\sqrt{s},\sqrt{1-s}\}$. The EVs on the left hand part of Eq.(\ref{evs}) are
\begin{align*}\label{evl}
EV\{L\}\equiv\left\{s^{n},s^{(n-1)}(1-s),\cdots,(1-s)^{n},0,\cdots,0\right\}.    
\end{align*}
On the other hand, for the right hand part of Eq.(\ref{evs}) the nonzero eigenvalues are evenly degenerate. Therefore, a necessary condition to hold Eq.(\ref{evs}) is that $s^n=s^{(n-j)}(1-s)^j$ for some $j\in\{1,\cdots,n\}$. However, this implies $s=1/2$, a contradiction, and hence completes the proof.  
\end{proof}
Importantly, Proposition \ref{prop2} holds true for any pairs of quantum correlations $P_{\phi^+_2}, P_\psi$,  where the correlation $P_{\phi^+_2}$ self-tests the state $\ket{\phi^+_2}$ and the correlations $P_\psi$ allow quantum realization with two-qubit non-maximally entangled states $\ket{\psi}$, not necessarily self-tests the state $\ket{\psi}$ and neither being an extremal quantum correlation; and thus we have $P_\psi^{\otimes n}\hspace{-.2cm}\nrightarrow\hspace{-.1cm} P_{\phi^+_2},~\forall~n\in\mathbb{N}$. While examples of $P_\psi$ can be constructed immediately, $P_{\phi^+_2}$ are the Tsirel'son-Landau-Masanes (TLM) boundary points of $222$ correlations \cite{Tsirel'son1987,Landau1988,Masanes2003}. Proceeding further, Proposition \ref{prop1} and Proposition \ref{prop2} lead us to the following theorem.
\begin{theorem}\label{theo1}
The quantum correlations $P_T$ and $P_H$ are incomparable in the strongest sense, {\it i.e.}, $P_T\ansimm P_H$.    
\end{theorem}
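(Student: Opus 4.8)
The plan is to simply combine the two preceding propositions, since Theorem~\ref{theo1} is by construction their conjunction read through the definition~\eqref{asym}. First I would recall the definition of asymptotic inequivalence: $R_1\ansimm R_2$ holds precisely when $R_1^{\otimes n}\hspace{-.2cm}\nrightarrow\hspace{-.1cm} R_2$ and $R_2^{\otimes n}\hspace{-.2cm}\nrightarrow\hspace{-.1cm} R_1$ for every $n\in\mathbb{N}$. So to establish $P_T\ansimm P_H$ it suffices to verify both one-directional no-go statements, and each of these is exactly one of the propositions already proven.

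Concretely, the first step invokes Proposition~\ref{prop1}, which asserts $P_T^{\otimes n}\hspace{-.2cm}\nrightarrow\hspace{-.1cm} P_H$ for all $n\in\mathbb{N}$: no LOSR protocol applied to any number of copies of the Tsirelson correlation can produce even a single copy of the Hardy correlation. The second step invokes Proposition~\ref{prop2}, which gives the reverse statement $P_H^{\otimes n}\hspace{-.2cm}\nrightarrow\hspace{-.1cm} P_T$ for all $n\in\mathbb{N}$. Substituting $R_1=P_T$ and $R_2=P_H$ into~\eqref{asym}, both required conditions are met, and hence $P_T\ansimm P_H$. Since both $P_T$ and $P_H$ belong to $\mathcal{Q}\setminus\mathcal{L}$ (indeed both are extreme points of $\mathcal{Q}$), they are genuine resources in RTQN, so this is a statement about incomparability of two legitimate nonlocal resources, not a trivial vacuity.

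Honestly, there is no real obstacle here: the theorem is a bookkeeping corollary, and all the mathematical content lives in Propositions~\ref{prop1} and~\ref{prop2}. The only thing worth being careful about is making sure the two propositions are genuinely about the \emph{same} pair of correlations and the \emph{same} free-operation class (LOSR/WCCPI) with the quantifier ``$\forall n$'' in both — which they are, as written. One could also remark, for emphasis, that this already refutes the existence of a ``maximally resourceful'' correlation that sits above everything in the asymptotic preorder, since $P_T$ (the CHSH-optimal point, the natural candidate for such a gold coin) fails to reach $P_H$ even asymptotically.

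\begin{proof}
By definition~\eqref{asym}, to show $P_T\ansimm P_H$ we must verify that $P_T^{\otimes n}\hspace{-.2cm}\nrightarrow\hspace{-.1cm} P_H$ and $P_H^{\otimes n}\hspace{-.2cm}\nrightarrow\hspace{-.1cm} P_T$ for every $n\in\mathbb{N}$. The first of these is precisely Proposition~\ref{prop1}, and the second is precisely Proposition~\ref{prop2}. Since both correlations lie in $\mathcal{Q}\setminus\mathcal{L}$, they are bona fide resources in the resource theory of quantum nonlocality, and the two no-go statements together establish that they are incomparable in the strongest sense.
\end{proof}
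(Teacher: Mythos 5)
Your proof is correct and matches the paper exactly: the paper itself presents Theorem~\ref{theo1} as an immediate consequence of Propositions~\ref{prop1} and~\ref{prop2} combined through the definition~\eqref{asym}, with no additional argument needed. Your care in checking that both propositions concern the same pair of correlations, the same free operations, and the universal quantifier over $n$ is exactly the right (and only) thing to verify here.
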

A comparative discussion with entanglement theory is worthwhile at this point. For the bipartite case, all the pure entangled states can be compared under LOCC. In fact, the von Neumann entropy of the reduced part of such states uniquely quantifies their entanglement. Theorem \ref{theo1}, in this sense, distinguishes RTQN from the theory of quantum entanglement. Importantly, the existence of bound entangled states with negative partial transpose (NPT) will lead to bipartite mixed entangled states that are incomparable in the strongest sense \cite{Vollbrecht2002}. However, the existence of such strongly incomparable pairs of mixed entangled states does not necessitate the existence of bound NPT states. Two entangled states with positive partial transposition might also serve as an example. Albeit we do not know example of any such pair of states. One may wonder whether the inequivalence established in Theorem \ref{theo1} is specific to the pair of correlations $P_T$ and $P_H$, and then having a "gold coin" (other than $P_T$ and $P_H$) can not be ruled out immediately. Nevertheless, our next result shows that there are uncountably many such inequivalent pairs of extreme nonlocal correlations and consequently lead us to conclude about the non-existence of any gold-coin resource.
\begin{theorem}\label{theo2}
All the pairs of quantum correlations $P_{\phi^+_2},P^{st}_{\psi}$ are incomparable in the strongest sense, {\it i.e.}, $P_{\phi^+_2}\ansimm P^{st}_{\psi}$.
\end{theorem}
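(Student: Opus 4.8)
The plan is to prove the two directions $P_{\phi^+_2}^{\otimes n}\nrightarrow P^{st}_\psi$ and $(P^{st}_\psi)^{\otimes n}\nrightarrow P_{\phi^+_2}$ separately, for every $n\in\mathbb{N}$, in each case by upgrading the relevant one-copy realization to a local-isometry identity via self-testing and then contradicting it by a Schmidt-spectrum count, exactly as in Propositions \ref{prop1} and \ref{prop2}. Throughout I use the reduction invoked there (following \cite{Dukaric2008,Lang2014}): since $P_{\phi^+_2}$ is realized by $\ket{\phi^+_2}$ and $P^{st}_\psi$ by a two-qubit non-maximally entangled state $\ket{\psi}$ with Schmidt coefficients $\{\sqrt{s},\sqrt{1-s}\}$, $s\neq 1/2$, any $222$ correlation produced from $n$ copies by LOSR is produced by two dichotomic measurements on each local part of $\ket{\phi^+_2}^{\otimes n}\equiv\ket{\phi^+_{2^n}}$, respectively of $\ket{\psi}^{\otimes n}$. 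Such POVMs are mixtures of projective measurements \cite{Masanes2006}, and this, together with any pre-shared randomness, only decomposes the target into a convex mixture of quantum correlations realized on the same state; since a self-tested correlation is an exposed, hence extreme, point of $\mathcal{Q}$, that mixture is trivial, so it suffices to exclude that the target is reproduced by projective measurements on the relevant pure tensor-power state.

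\emph{Direction $P_{\phi^+_2}^{\otimes n}\nrightarrow P^{st}_\psi$.} Assume, for contradiction, that projective measurements on $\ket{\phi^+_{2^n}}$ reproduce $P^{st}_\psi$. As $P^{st}_\psi$ self-tests $\ket{\psi}$, there are local isometries, promoted to unitaries $U_A\otimes U_B$ by adjoining pure local ancillae exactly as in Eq.(\ref{nHT}), with $U_A\otimes U_B(\ket{\phi^+_{2^n}}\ket{\eta}_{A'}\ket{\eta}_{B'})=\ket{\psi}_{A_1B_1}\ket{\zeta}_{A_2B_2}$. A local isometry preserves the spectrum of the reduced state, so the uniform spectrum $\{2^{-n},\dots,2^{-n}\}$ of $\mathbf{I}_{2^n}/2^n$ must coincide with that of $\rho^{\psi}_{A_1}\otimes\rho^{\zeta}_{A_2}$, i.e.\ with the multiset $\{s\,\nu_k\}\cup\{(1-s)\,\nu_k\}$ where $\{\nu_k\}$ is the spectrum of $\rho^{\zeta}_{A_2}$. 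Uniformity forces $s\,\nu_k=(1-s)\,\nu_k=2^{-n}$ for every nonzero $\nu_k$, hence $s=1-s=1/2$, contradicting the non-maximal entanglement of $\ket{\psi}$.

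\emph{Direction $(P^{st}_\psi)^{\otimes n}\nrightarrow P_{\phi^+_2}$.} This is the extension of Proposition \ref{prop2} announced after its proof, with the self-testing correlation $P_{\phi^+_2}$ now in the role of $P_T$. If projective measurements on $\ket{\psi}^{\otimes n}$ reproduced $P_{\phi^+_2}$, then since $P_{\phi^+_2}$ self-tests $\ket{\phi^+_2}$ \cite{Supic2020} one gets $U_A\otimes U_B(\ket{\psi}^{\otimes n}\otimes\text{ancillae})=\ket{\phi^+_2}_{A_1B_1}\ket{\zeta}_{A_2B_2}$; equating $A$-reduced spectra, the multiset in which $s^{n-j}(1-s)^{j}$ has multiplicity $\binom{n}{j}$ (for $j=0,\dots,n$) would have to equal the spectrum of $\tfrac12\mathbf{I}_2\otimes\rho^{\zeta}_{A_2}$, in which every value occurs with even multiplicity. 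Taking $s>1/2$ without loss of generality, the largest element $s^{n}$ has multiplicity $\binom{n}{0}=1$ (a coincidence $s^{n}=s^{n-j}(1-s)^{j}$ with $j\geq1$ would force $s=1/2$), which is odd --- a contradiction. The two directions give $P_{\phi^+_2}\ansimm P^{st}_\psi$; since self-tests exist for every two-qubit pure entangled state (e.g.\ through the tilted CHSH inequalities), $s$, and hence the family of such incomparable pairs, ranges over a continuum.

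\emph{Expected obstacle.} Structurally the proof is Propositions \ref{prop1} and \ref{prop2} again, so the delicate point is the use of the hypotheses rather than any fresh computation. For $P_{\phi^+_2}^{\otimes n}\nrightarrow P^{st}_\psi$ it is essential that $P^{st}_\psi$ \emph{self-tests} a non-maximally entangled qubit pair and not merely that it is \emph{realized} by one: a correlation with a non-maximal realization may well also admit a realization on $\ket{\phi^+_{2^n}}$, so without self-testing this direction can fail --- which is exactly why Theorem \ref{theo2} is stated with $P^{st}_\psi$ and not with the weaker $P_\psi$ of the remark after Proposition \ref{prop2}. The residual concern that pre-shared randomness might help is dispatched by the extremality of self-tested correlations, after which everything reduces to the eigenvalue bookkeeping above.
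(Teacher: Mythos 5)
Your proof is correct and follows essentially the same route as the paper's Appendix C: both directions are handled by promoting the self-testing isometry to a local unitary with ancillae and then comparing reduced-state spectra --- the uniform spectrum $\{2^{-n}\}$ versus $\rho^{\psi}_{A_1}\otimes\rho^{\zeta}_{A_2}$ in one direction, and the binomial multiset $\{s^{n-j}(1-s)^{j}\}$ versus an evenly degenerate spectrum in the other. Your explicit handling of shared randomness via extremality of self-tested correlations, and the odd-multiplicity phrasing of the second direction, are slightly more careful than the paper's but amount to the same argument.
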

Here $P_{\phi^+}$'s self-test the state $\ket{\phi^+_2}$ and $P^{st}_{\psi}$'s self-test the two-qubit non-maximally entangled states $\ket{\psi}$. The proof of this theorem is similar to Proposition \ref{prop2}. For the sake of completeness we discuss the proof in Supplementary material \cite{Supple}. 

{\it Distilling Nonlocality.--} In nonlocality distillation, the goal is to obtain highly nonlocal correlations by starting with multiple copies of weakly nonlocal systems \cite{Forster2009, Brunner2009, Ebbe2014, Beigi2015, Brito2019, Eftaxias2022, Naik2023}. As a consequence of the above theorems, we will now derive a nontrivial restriction on the asymptotic distillation of nonlocal quantum correlations.
\begin{corollary}\label{coro1}
Consider the correlations $P^{(\lambda)}_x,P^{(\lambda)}_y\in\mathcal{Q}$, such that $P^{(\lambda)}_x:=\lambda P_{x}+(1-\lambda)L$ and $P^{(\lambda)}_y:=\lambda P_{y}+(1-\lambda)L$ with $x\in X\equiv\{H,\psi\}$, $y\in Y\equiv\{T\}$, and $\lambda\in(0,1]$. Starting with arbitrary many copies of the correlation, neither $P^{(\lambda)}_x$ can be distilled to $P_y$ nor $P^{(\lambda)}_y$ can be distilled to $P_x$.
\end{corollary}
\begin{proof}
$N$-copies of the correlation $[P^{(\lambda)}_x]^{\otimes N}$ reads as
$[P^{(\lambda)}_x]^{\otimes N}=\sum_{k=0}^{N}\lambda^k(1-\lambda)^{(N-k)} \times\Pi_k\{P_x^{\otimes k}\otimes L^{\otimes (N-k)}\}$, where $\Pi_k\{P_x^{\otimes k}\otimes L^{\otimes (N-k)}\}$ denotes all possible permutations of $k$ copies of $P_x$ and $(N-k)$ copies of $L$. Note that, sharing any local box is allowed as free operation within the resource theory of nonlocal. On the other hand, Theorem \ref{theo1} and \ref{theo2} imply $P_x\ansimm P_y,~\forall~x\in X\text{ and }y\in Y$. Finally, noting that the similar decomposition also holds for $[P^{(\lambda)}_y]^{\otimes N}$ we thus establish the claim.
\end{proof}
Consider the class of $222$ isotropic correlations defined as,
\begin{align*}
PR_{\eta}(ab|xy):=\begin{cases}(1+\eta)/4,~~~\mbox{if}~a\oplus b=xy\\
(1-\eta)/4,~~~~~~~~~\mbox{otherwise}.\end{cases}
\end{align*}
For $0\le\eta\le1$ the correlations belong to the set $\mathcal{N}$, for $0\le\eta\le1/\sqrt{2}$ they belong to $\mathcal{Q}$, and for $0\le\eta\le1/2$ they belong to $\mathcal{L}$. Furthermore, $PR_{1/\sqrt{2}}$ corresponds to the Tsirelson's  $P_T$. A well-known conjecture, regarding distillability of isotropic correlations is that from arbitrary many copies of $PR_{\eta_1}$ it is not possible to distill $PR_{\eta_2}$, where $1/2<\eta_1<\eta_2<1$ \cite{Lang2014}. While some partial results are known with finite copy manipulation \cite{Short2009,Forster2011}, recently the authors in \cite{Beigi2015} have proved the conjecture for correlations with $1/\sqrt{2}<\eta_1<\eta_2<1$. Our next theorem establishes a nontrivial result to this direction with isotropic quantum nonlocal correlations.
\begin{theorem}\label{t3}
There exist isotropic quantum correlations $PR_\eta$, with $\eta\in(1/2,1/\sqrt{2})$, that cannot be distilled up to $P_T$, even asymptotically.
\end{theorem}
Proof of the Theorem just follows from Corollary \ref{coro1} and the geometry of correlation space (see Fig. \ref{fig1}).

At this point, one may pose a different question. While nonlocality distillation is typically motivated by the desired resource one wishes to achieve, there might be protocols that simultaneously distill fractions of different inequivalent resources. In other words, the absence of a unique `gold coin resource' does not immediately rule out the existence of such a `gold protocol' (see the Supplementary material \cite{Supple} for pictorial explanation). At present we do not know any analytic method to tackle this question, and hence leave this question for future research.
\begin{figure}[t!]
\includegraphics[width=0.48\textwidth,left]{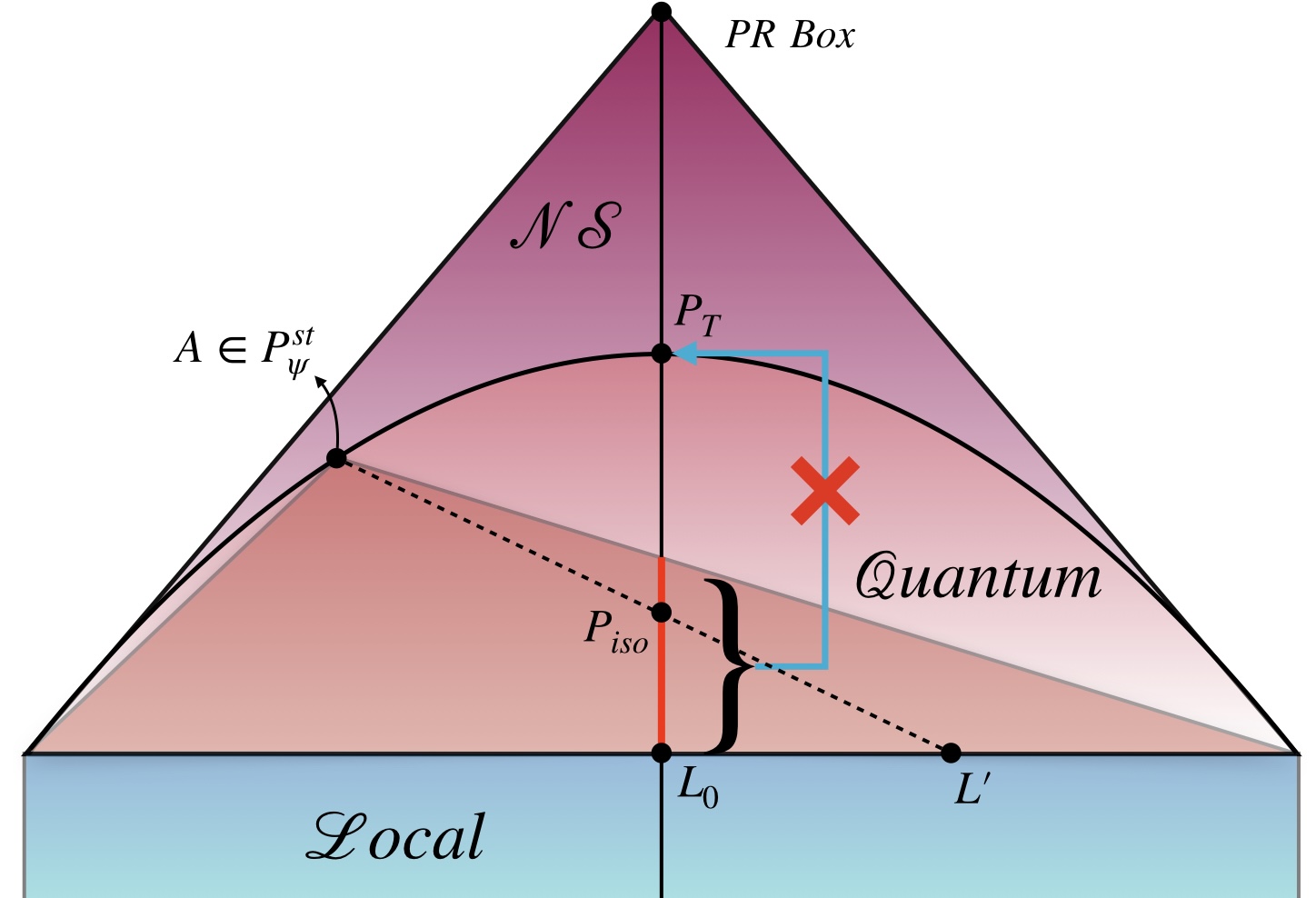}
\caption{According to Corollary \ref{coro1}, any quantum point obtained through convex mixing of the point $P_{\psi}^{st}$ and a local point $L^\prime$ ({\it i.e.}, the line AL$^\prime$) cannot be distilled to the point $P_T$, even asymptotically. The correlations on the line joining the points PR-Box and $L_0$ are isotropic correlations, with the line segment $(L_0,P_T]$ representing quantum nonlocal isotropic correlations ({\it i.e.}, $1/2<\eta<1/\sqrt{2}$). Clearly, the quantum isotropic correlation $P_{iso}$ cannot be distilled to $P_T$, partially solving the conjecture proposed in \cite{Lang2014} within the quantum region, as stated in our Theorem \ref{t3}. By varying the point $L^\prime$, we can, in fact, obtain a set of quantum isotropic correlations with nonzero measure that cannot be distilled to $P_T$.}
\label{fig1}
\end{figure}

{\it Discussions.--} Establishing asymptotic inequivalence among different types of quantum nonlocal correlations carries significant practical implications. These correlations are pivotal for various information-theoretic tasks. Our Theorem 1 elucidates that if a specific quantum correlation is indispensable for the flawless execution of a task, then the same task may not be executed flawlessly even with numerous copies of inequivalent quantum correlations. Instances of such scenarios have been documented in zero-error and reverse-zero-error communication scenarios \cite{Cubitt2010,Cubitt2011,Alimuddin2023}, as well as in Bayesian game scenarios \cite{Banik2019}. Consequently, when deriving nonlocal correlations from entangled quantum states for these tasks, it is imperative to perform the appropriate local measurements on the given state.  

It is crucial to highlight that in our investigation, we have presumed that both the quantum state and measurement devices are predetermined, thereby resulting in nonlocal correlations that can be subsequently altered through the free operation of Local Operations and Shared Resources (LOSR). However, an alternative scenario can be envisaged, wherein the local components of multiple copies of these states are collectively manipulated by conducting measurements in an entangled basis. This scenario gives rise to a distinct resource theory, namely the resource theory of entanglement under LOSR. As we note that, in this broader framework, it is possible, albeit probabilistically, to obtain a correlation $P_T$ starting with many copies of $P_H$.  However, we are unaware of any protocol that yields the correlation $P_H$ starting with many copies of $P_T$. Asymptotic analysis of probabilistic transformation among different nonlocal correlations, in this broader framework, promises to shed light on the intricate structures of quantum nonlocal correlations and quantum entanglement. Finally, while our results establish intricacies in multi-copy manipulation of quantum nonlocal correlations, the present study mainly deals with the $222$ correlations. A similar analysis for multipartite correlations with higher number of inputs and outputs is worth exploring.  

\begin{acknowledgements}
SRC acknowledges support from University Grants Commission, India (reference no. 211610113404). MB acknowledges funding from the National Mission in Interdisciplinary Cyber-Physical systems from the Department of Science and Technology through the I-HUB Quantum Technology Foundation (Grant no: I-HUB/PDF/2021-22/008), support through the research grant of INSPIRE Faculty fellowship from the Department of Science and Technology, Government of India, and the start-up research grant from SERB, Department of Science and Technology (Grant no: SRG/2021/000267).
\end{acknowledgements}

\section{Supplemental}
\section{Framework of Nonlocal Resource Theory}
Every resource theory begins by establishing the concepts of free resources and free operations \cite{Chitambar2019}. In this section, we provide a brief overview of the resource theory of nonlocality. \\\\
\textbf{Free resources:} In resource theory of nonlocality we are interested in the joint input-output statistics among spatially separated parties. Space-like separation among the parties demands joint probabilities to satisfy the no-signaling condition that prohibits instantaneous transfer of information among the different parties. In this resource theory, the free resources are the Bell local correlations that can be factorized as 
\begin{align}
p(\Vec{a}|\Vec{x})=\int_{\Lambda}d\lambda p(\lambda)p(a_1|x_1,\lambda)\cdots p(a_n|x_n,\lambda),   
\end{align}
where $\lambda\in\Lambda$ is a classical variable shared among the parties, and $p(\lambda)$ is a probability density function over $\Lambda$. Correlations that are not local are termed as nonlocal, and they are the resource. \\
\textbf{Free operations:} 
The operations that keeps free resources free are identified as the free operations. In the context of nonlocal resource theory, such operations are classified as LOSR \cite{Vicente2014}. However, it has been observed that the set of free operations in nonlocal resource theory extends beyond LOSR, encompassing a broader category known as WCCPI \cite{Gallego2012}. In the bipartite Bell scenario, the relation between LOSR and WCCPI has been extensively studied by Gallego et al.  \cite{Gallego2017}. They showed that, action of the operation WCCPI on any no-signaling box $P_{0}$, can be written as 
\begin{align}
\mathcal{W}_{WCCPI}(P_{0}):= pL(P_{0})+(1-p)\mathcal{W}_{LOSR}(P_{0}), 
\end{align}
Where $L(P_{0})\in \mathcal{L}$ for all $P_{0}\in \mathcal{N}$. \\\\ So, the operation of $\mathcal{W}_{WCCPI}$ on a given no-signaling box can be understood as a convex mixture of a local box and another no-signaling box resulting from certain LOSR applied to the original one. It's important to note that in the realm of bipartite nonlocality distillation, the broad spectrum of free operations encompassed by WCCPI doesn't offer any advantage over LOSR operations. 

\section{Single-copy Manipulation of Nonlocal Correlations in $\mathcal{Q}$}
At single copy level, WCCPI protocol simply boils down to relabeling of the inputs and manipulating the outputs based on the given input. More explicitly, starting with a correlation $P(a^\prime b^\prime|x^\prime y^\prime)$ (say parent correlation) the players can decide to choose the input $z$ which is either $z^\prime$ or $\bar{z}^\prime$, where $z=x,y$ for Alice and Bob respectively. While dealing with local correlations the players can choose $z$ to be a constant function. However, such choices will map a nonlocal correlation to a local one \cite{Fine1982}. Upon getting an outcome $a^\prime$ and $b^\prime$ they can determine their final output as $a = f_{A}(x,a')$ and $b = f_{B}(y,b')$, respectively, where $f_{A/B}:\{0,1\}^{\times 2}\mapsto\{0,1\}$ (see Fig.\ref{fig2}). The process will result in a child correlation $\tilde{P}(ab|xy)$. The WCCPI protocol obtained by aforesaid procedure will be called a deterministic protocol. One can consider more general kind of protocols that stochastically map the primed variables to the unprimed ones. In the following we show that starting with single copy of a nonlocal correlation only a finite number of child correlation can be obtained under deterministic WCCPI.
\begin{figure}[h!]
\centering
\includegraphics[scale = 0.12]{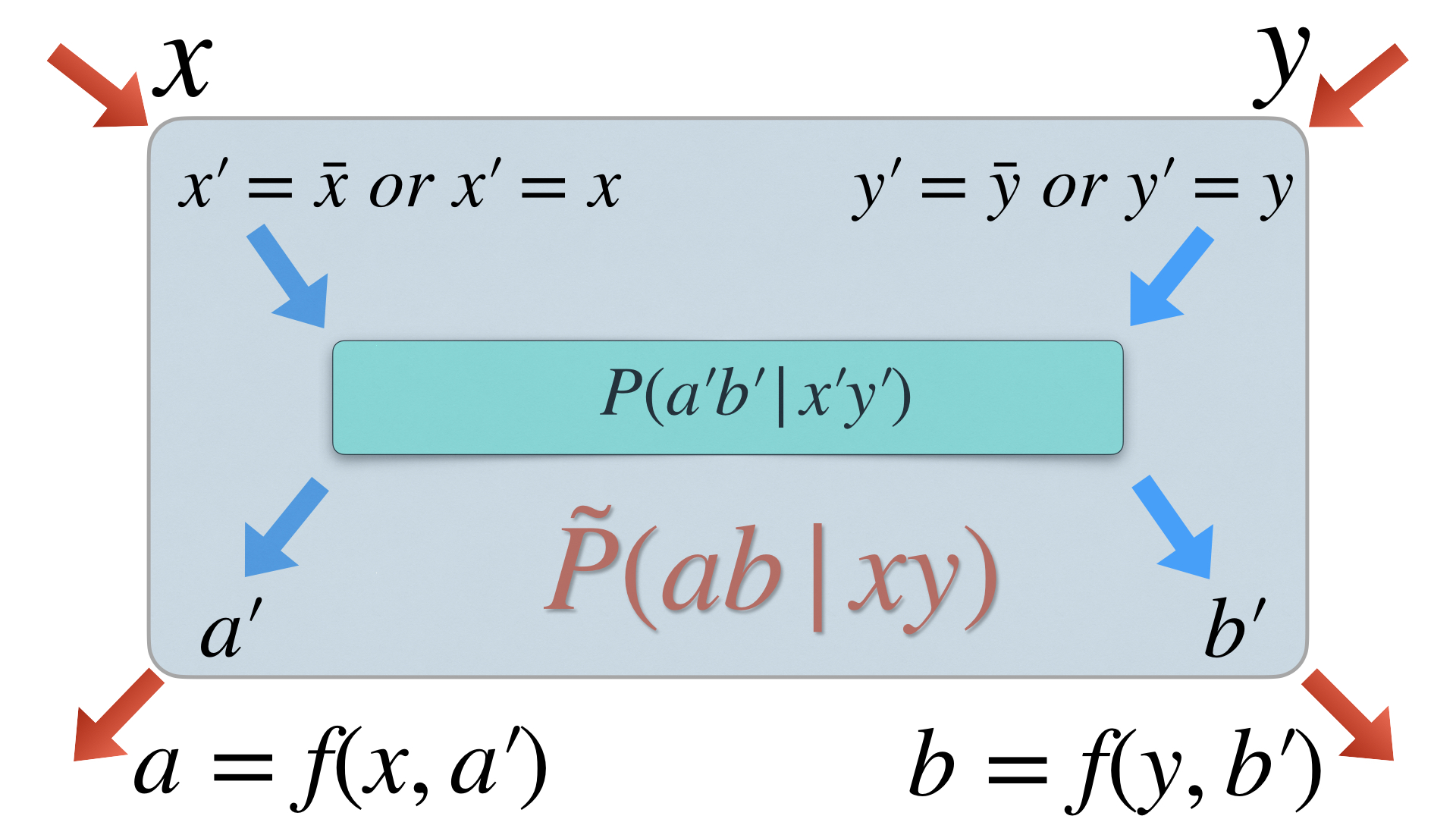}
\caption{Single-copy manipulation:  WCCPI protocol on parent correlation $P(a'b'|xy)$ results in a child correlation $\tilde{P}(ab|xy)$.}
\label{fig2}
\end{figure}
\begin{Lemma}
Deterministic WCCPI protocols on single copy of a $222$ nonlocal correlation can yield at most $8$ different child nonlocal correlations.  
\end{Lemma}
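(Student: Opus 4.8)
The plan is to count the number of distinct deterministic WCCPI protocols and then argue that, modulo the relabelings that act trivially, only eight inequivalent nonlocal child correlations can result. First I would parametrize a deterministic protocol by the data $(\epsilon_A,\epsilon_B,f_A,f_B)$, where $\epsilon_A,\epsilon_B\in\{0,1\}$ encode whether Alice and Bob feed in $z'$ or $\bar z'$ (the constant-input choices are discarded, since by Fine's theorem \cite{Fine1982} they send a nonlocal parent to a local child, and we are only counting \emph{nonlocal} children), and $f_A,f_B:\{0,1\}^{\times2}\to\{0,1\}$ are the output post-processings $a=f_A(x,a')$, $b=f_B(y,b')$. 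Naively there are $2\cdot2\cdot2^4\cdot2^4$ such tuples, so the content of the lemma is that this collapses to at most $8$ nonlocal child correlations.

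The key reduction is to notice that for each fixed value of the input $x$, the map $a'\mapsto f_A(x,a')$ is a function $\{0,1\}\to\{0,1\}$, hence one of the four maps: identity, bit-flip, constant-$0$, constant-$1$. If for some input $x$ the map is constant, then the $x$-column of the child correlation is a product (deterministic on Alice's side), which again forces locality on that branch and, combined with no-signaling, degrades the child out of the extremal nonlocal family — so I would argue that to retain genuine $222$-nonlocality one needs $a'\mapsto f_A(x,a')$ to be a \emph{bijection} for every $x$, i.e.\ each of $f_A(0,\cdot)$ and $f_A(1,\cdot)$ is either the identity or the bit-flip. Thus $f_A$ is specified by two bits $(\mu_0,\mu_1)\in\{0,1\}^2$ with $f_A(x,a')=a'\oplus\mu_x$, and similarly $f_B(y,b')=b'\oplus\nu_y$. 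Together with the input choices $x'=x\oplus\epsilon_A$, $y'=y\oplus\epsilon_B$, the whole protocol is an element of the group $G$ generated by input negations, input-conditioned output negations (on each side), and the global input/output swap is \emph{not} available here since parties are fixed — so $G$ has order $2^{6}=64$ (or we may further note $G\cong(\mathbb{Z}_2)^2\rtimes(\mathbb{Z}_2)^2$ on each side).

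Next I would compute the action of $G$ on the family of nonlocal vertices. The eight PR-type nonlocal vertices $P_{\mathrm{NL}}^{\alpha\beta\gamma}$ of Eq.~(\ref{NL}) are permuted transitively among themselves by these reversible local relabelings \cite{Barrett2005(0)}, and the stabilizer of any one of them has index $8$ in $G$; hence $G$ acting on $P_{\mathrm{NL}}^{000}$ produces exactly the orbit of size $8$. For a \emph{general} nonlocal $222$ correlation $P$, the same group $G$ acts, and the orbit has size \emph{at most} $|G|/|\mathrm{Stab}(P)|$; the claim to establish is that this orbit size is at most $8$ for all nonlocal $P$. I would prove this by showing that the child correlation depends on $(\epsilon_A,\epsilon_B,\mu_0,\mu_1,\nu_0,\nu_1)$ only through a residual $8$-element quotient: concretely, writing the CHSH-type correlators $\langle X_iY_j\rangle$ of $P$, the output negation $\mu_x$ flips the sign of both correlators in row $x$, the input negation $\epsilon_A$ permutes the two rows, and analogously on Bob's side; modding out by the transformations that merely permute/sign the \emph{full correlation table consistently}, one is left with the sign pattern of the CHSH box, which takes $8$ values (the $8$ facets of the CHSH polytope / $8$ versions of the PR box). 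Hence at most $8$ distinct nonlocal children.

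The main obstacle I anticipate is the step asserting that a non-bijective output post-processing necessarily destroys $222$-nonlocality: this needs a short argument that merging two outcomes into one on even a single input-branch, after renormalization, yields a correlation lying in the local polytope $\mathcal{L}$ (one can invoke that the resulting box has an effective binary-input/unary-output structure on that branch, which is trivially local, plus convexity/no-signaling to handle the surviving branch). If that turns out to be false in some corner case, the fallback is to enumerate: even including non-bijective $f_A,f_B$, the nonlocal children still form a subset of the $G$-orbit described above, because any non-bijective branch contributes a local component and the extreme nonlocal structure can only be preserved by the bijective protocols — so the bound of $8$ is unaffected. The remaining bookkeeping (verifying the orbit of a generic nonlocal vertex under $G$ has size dividing $8$, and exhibiting a $P$ for which it is exactly $8$, e.g.\ $P=P_T$ or any PR box) is routine and I would relegate it to a short explicit table.
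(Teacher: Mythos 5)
Your reduction to the $64$-element group of nonlocality-preserving deterministic protocols is exactly the paper's route: constant input choices and non-bijective output branches are excluded via Fine's theorem, leaving $x'=x\oplus\epsilon_A$, $y'=y\oplus\epsilon_B$ together with the four per-party post-processings $a=a'\oplus\mu_x$, $b=b'\oplus\nu_y$ (these are precisely the paper's $f^1,\dots,f^4$), i.e.\ $2^6=64$ protocols; and your orbit--stabilizer count for a PR box (stabilizer of order $8$, hence an orbit of size $8$, namely the eight vertices $P_{\text{NL}}^{\alpha\beta\gamma}$) is exactly what the paper establishes by tabulating the $8$ protocols that fix a given $P_{\text{NL}}^{\alpha\beta\gamma}$.

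The gap is in your final step, where you claim the child correlation depends on $(\epsilon_A,\epsilon_B,\mu_0,\mu_1,\nu_0,\nu_1)$ only through an $8$-element quotient for \emph{every} nonlocal parent. No such quotient exists: the $64$-element group acts faithfully on the correlation space, and a child is not determined by the sign pattern of the correlators $\langle X_iY_j\rangle$, because the marginals also transform. For instance, flipping all outputs on both sides ($\mu_0=\mu_1=\nu_0=\nu_1=1$, $\epsilon_A=\epsilon_B=0$) preserves every $\langle X_iY_j\rangle$ but negates every marginal, so on a parent with biased marginals (e.g.\ the Hardy correlation $P_H$) it yields a genuinely different nonlocal child; a parent in general position has trivial stabilizer and an orbit of size $64$, all of whose members are nonlocal since these relabelings are reversible and local. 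Hence the bound of $8$ cannot be obtained for arbitrary nonlocal parents by your quotient argument. Note that the paper's own proof does not attempt this generalization either: it only exhibits the order-$8$ stabilizer of the PR vertices (equivalently, of correlations sharing that symmetry, such as $P_T$), which is what is actually used downstream. If you restrict the claim to the extremal vertices $P_{\text{NL}}^{\alpha\beta\gamma}$, or more generally to parents whose stabilizer contains those $8$ protocols, your argument closes; as stated for all nonlocal $P$, the key step fails.
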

\begin{proof}
As already mentioned to ensure the nonlocality of the resulting child correlation the input has to be $z=z^\prime$ or $z=\bar{z}^\prime$ for $z\in\{x,y\}$. For a given choice of input there exist $16$ different choices for $f_A$ (as well as for $f_B$) to produce the outputs of the child box. However, one can impose further restriction on these choice to ensure the nonlocality. For instance, let for the input $x=0$, the function $f_A(x,a')$ is such that it always yields an unique outcome, say, $a=0$ (see table \ref{tab1}). This implies that the value of $a$ is deterministic for the input $x=0$. Hence a joint probability distribution exists for the outcomes  corresponding to the inputs $x=0$ and $x=1$, and according to the result of \cite{Fine1982} the child becomes local. Therefore such functions are not allowed for our purpose. It does boil down to count the number of such functions that will not result in joint probability distribution of the outputs.    
\begin{table}[t!]
\begin{subtable}[]{0.2\textwidth}
\centering
\begin{tabular}{|c||c|c|}
\hline
 $~~~~z~~~~$ & $~~~~c^\prime~~~~$ & $~~~~c~~~~$ \\
 \hline \hline
 0  & 0  & 0  \\\hline
 0  & 1  & 0  \\\hline
 1  & 0  & 1  \\\hline
 1  & 1  & 0  \\\hline
\end{tabular}
\caption{\centering $c:=z\wedge\bar{c}^\prime$}
\end{subtable}
\hfill
\begin{subtable}[]{0.2\textwidth}
\centering
\begin{tabular}{|c||c|c|}
\hline
$~~~~z~~~~$ & $~~~~c^\prime~~~~$ & $~~~~c~~~~$ \\
\hline \hline
0  & 0  & 0  \\\hline
0  & 1  & 1  \\\hline
1  & 0  & 1  \\\hline
1  & 1  & 0  \\\hline
\end{tabular}
\caption{\centering $c:=z\oplus c^\prime$}
\end{subtable}
\vspace{-.2cm}
\caption{(a) Truth table for the function $c=f(z,c^\prime):=z\wedge \bar{c}^\prime$. This output mapping allows joint probability distribution rendering the child correlation to be local. (b) $c=f(z,c^\prime):=z\oplus c^\prime$: an allowed output mapping.}
\label{tab1}
\end{table}
A careful observation reveals that only the following four output mappings are allowed: 
\begin{align*}
f^1(c,z) &= c,~~~~~~~~~f^2(c,z) = \bar{c},\\
f^3(c,z) &= c\oplus z,~~f^4(c,z)= \overline{c\oplus z},   
\end{align*}
where $(c,z)\in\{(a,x),(b,y)\}$. Now, the relabeling of inputs gives rise to four different scenarios, two for each players.  Together with the $4\times4 = 16$ aforementioned output manipulations, we thus have $64$ different possible single-copy WCCPI protocols that preserves nonlocal character. Among theses $64$ protocols only $8$ map a particular $PR$ box, $P_{\text{NL}}^{\alpha\beta\gamma}$(see Eq. (1b) in the main text) to itself. These protocols are listed in TABLE \ref{tab2}. 
\begin {center}
\begin {table}[h!]
\begin {tabular} {|c|c|c|c|}
\hline
$x'=\bar x,~y'=\bar y$ & $x'=\bar x,~y'=y$ & $x'=x,~y'=\bar y$ & $x'=x,~y'=y$ \\\hline\hline
&&&\\
$a=a'\oplus x$ & $a=a'$ & $a=a'\oplus x$ & $a=a'$\\
$b=\overline{b'\oplus y}$ & $b=b'\oplus y$ & $b=b'$ & $b=b'$ \\\hline
&&&\\
$a = \overline{a'\oplus x}$& $a=\overline{a'}$& $a=\overline{a'\oplus x}$& $a=\overline{a'}$ \\
$b=b'\oplus y$ & $b=\overline{b'\oplus y}$ & $b=\overline{b'}$ & $b=\overline{b}$ \\\hline
\end {tabular}
\caption {$8$ deterministic WCCPI protocols that maps the correlation $P^{\alpha\beta\gamma}_{NL}$ to itself. }
\label{tab2}
\end {table}
\end {center}
Since a stochastic WCCPI protocol cannot preserve extremality of quantum nonlocal correlations, there are only $8$ WCCPI protocols that can map an extremal quantum nonlocal correlation to another extremal. This completes the proof. 
\end{proof}

\section{Proof of Lemma 1}
\begin{proof}
We have 
\begin{subequations}
\begin{align}
\ket{\tilde{\phi}^+_d}&=\sum_{m=0}^{d-1}\ket{m}\otimes\ket{m}\in\mathbb{C}^d\otimes\mathbb{C}^d,\\
A&=\sum_{i,j}a_{ij}\ketbra{i}{j}\in\mathcal{L}(\mathbb{C}^d),\\
B&=\sum_{k,l}b_{kl}\ketbra{k}{l}\in\mathcal{L}(\mathbb{C}^d),
\end{align}
\end{subequations}
where $\mathcal{L}(\star)$ denotes the set of linear operators acting on the corresponding Hilbert space. We therefore have,
\begin{align}
(A\otimes B)\ket{\tilde{\phi}^+_d}&=\sum_{i,j,k,l,m}a_{ij}b_{kl}\ket{i}\langle j|m\rangle\otimes\ket{k}\langle l|m\rangle\nonumber\\
&=\sum_{i,j,k,l,m}a_{ij}b_{kl}\ket{i}\delta_{jm}\otimes\ket{k}\delta_{lm}\nonumber\\
&=\sum_{i,k,m}a_{im}b_{km}\ket{i}\otimes\ket{k}.\label{c1}
\end{align}
We also have,
\begin{align*}
B(A)^{\mathrm{T}}&=\left(\sum_{k,l}b_{kl}\ketbra{k}{l}\right)\left(\sum_{i,j}a_{ij}\ketbra{i}{j}\right)^{\mathrm{T}}\\
&=\sum_{i,j,k,l}a_{ij}b_{kl}\ket{k}\langle l|j\rangle\bra{i}\\
&=\sum_{i,j,k}a_{ij}b_{kj}\ket{k}\bra{i},
\end{align*}
which further implies,
\begin{align}
\left(\mathbf{I}\otimes BA^\mathrm{T}\right)\ket{\tilde{\phi}^+_d}&=\sum_{i,j,k,m}a_{ij}b_{kj}\ket{m}\otimes\ket{k}\langle i|m\rangle\nonumber\\
&=\sum_{i,j,k}a_{ij}b_{kj}\ket{i}\otimes\ket{k}.\label{c2}
\end{align}
The claim of the Lemma 1 follows by comparing Eq.(\ref{c1}) and Eq.(\ref{c2}).
\end{proof}
\begin{figure}[t!]
\includegraphics[width=0.48\textwidth,left]{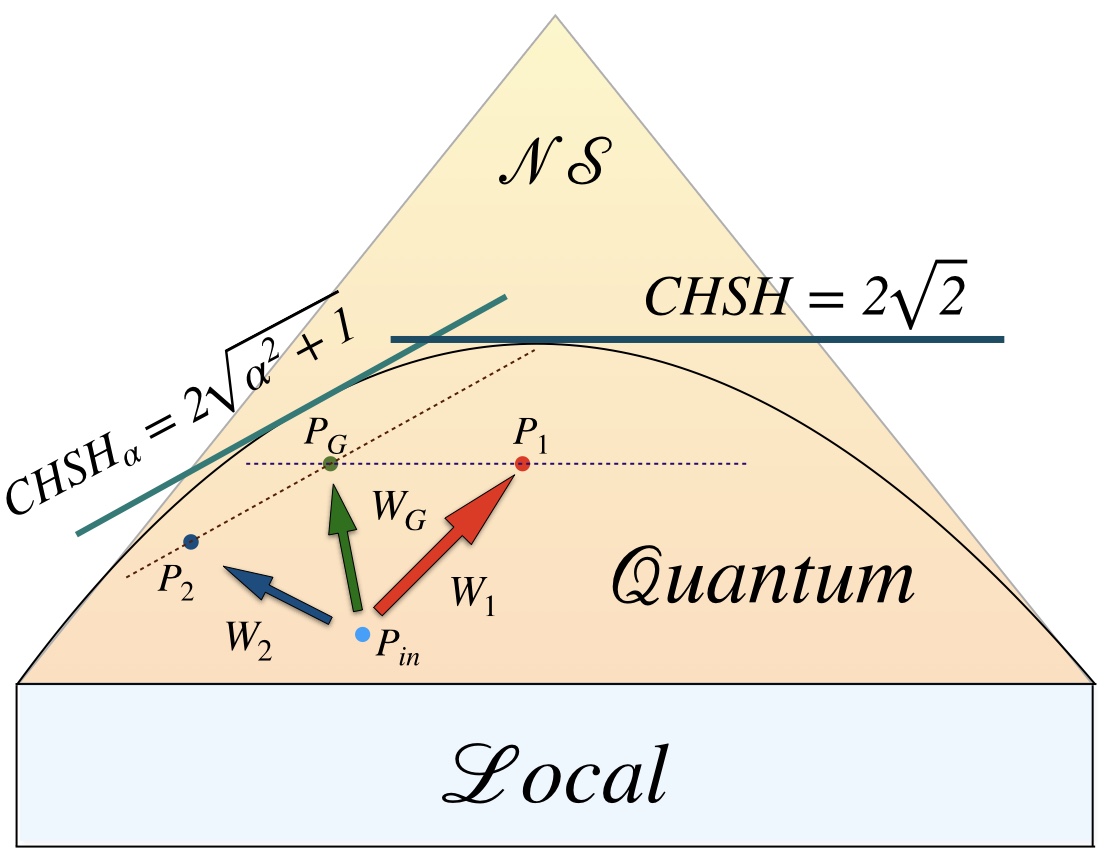}
\caption{{\bf [Possibility of Gold protocol]} Given a correlation $P_{in}\in\mathcal{Q}$, the perpendicular distance from the $\mathcal{CHSH}=2\sqrt{2}$ hyperplane determine the Tsirelson fraction of the correlation, whereas perpendicular distance from the $\mathcal{CHSH}_\alpha:=\alpha\langle X_0\rangle+\mathcal{CHSH}=2\sqrt{\alpha^2+1}$ hyperplane determine its tilted CHSH fraction \cite{Acin2012}. On $P^{\otimes2}_{in}$ the wirings $\mathcal{W}_1$ and $\mathcal{W}_2$ result in a correlations $P_1$ and $P_2$ respectively, having optimal Tsirelson fraction and tilted CHSH fraction. Interestingly, the two-copy gold protocol $\mathcal{W}_G$ optimize both the fractions simultaneously.}
\label{fig1}
\end{figure}

\section{Proof of Theorem 2}
\begin{proof}
To establish the claim we need to prove that
\begin{subequations}
\begin{align}\label{theo2-1}
\left(P^{st}_{\psi}\right)^{\otimes n}\hspace{-.2cm}&\nrightarrow\hspace{-.1cm} P_{\phi^+_2},~\forall~n\in\mathbb{N},~\mbox{and}\\\label{theo2-2}
\left(P_{\phi^+_2}\right)^{\otimes n}\hspace{-.2cm}&\nrightarrow\hspace{-.1cm} P^{st}_{\psi},~\forall~n\in\mathbb{N}.
\end{align}
\end{subequations}
Proof of (\ref{theo2-1}) follows the exact same arguments of Proposition 2. Thus we are remain to prove only (\ref{theo2-2}). Analogous to Eq.(5) of the main text, in this case we have
\begin{align}\label{nmnm}
\Phi_A\otimes\Phi_B(\ket{\phi^+_2}_{AB}^{\otimes n})=\ket{\psi}_{A_1B_1}\ket{\zeta}_{A_2B_2},
\end{align}
Incorporating local ancillas $\ket{\eta}_{A^\prime}\ket{\eta}_{B^\prime}$, the Eq.(\ref{nmnm}) reads as 
\begin{align}\label{nmnm1}
U_A\otimes U_B(\ket{\phi^+_2}_{AB}^{\otimes n}\ket{\eta}_{A^\prime}\ket{\eta}_{B^\prime})=\ket{\psi}_{A_1B_1}\ket{\zeta}_{A_2B_2},
\end{align}
With a similar line of argument as of Proposition 2, we can thus conclude 
\begin{align}\label{nmnm2}
\text{EV}\left\{\left(\mathbf{I}_2/2\right)_A^{\otimes n}\otimes\ket{\eta}_{A^\prime}\bra{\eta}\right\}\equiv \text{EV}\left\{\rho^{\psi}_{A_1}\otimes\rho^\zeta_{A_2}\right\}.
\end{align}
Please note that 
\begin{align*}
\text{EV}\left\{(\mathbf{I}_2)_A/2^{\otimes n}\otimes\ket{\eta}_{A^\prime}\bra{\eta}\right\}\equiv\left\{\frac{1}{2^n},\cdots,\frac{1}{2^n},0,\cdots,0\right\}.  
\end{align*}
This induces a contradiction as $\ket{\psi}$ be a two-qubit non-maximally pure entangled state. This completes the proof. 
\end{proof}
We have already pointed out that apart form $P_T$ the TLM boundary points of $222$ correlations also self-test the state $\ket{\phi^+_2}$ and thus be the bona-fide candidates for $P_{\phi^+_2}$. For the $P^{st}_\psi$ we can consider the correlation that maximize the tilted-CHSH expression 
$\mathcal{CHSH}_\alpha:=\alpha\langle X_0\rangle+\mathcal{CHSH}$ \cite{Acin2012}. Quantum realization of $P^{st}_{\psi}$ is given by,
\begin{align}
\left\{\!\begin{aligned}
\ket{\psi_\theta}=\cos\theta\ket{00}+\sin\theta\ket{11};~X_0=\sigma_z,\\X_1=\sigma_x;~Y_j=\cos\mu~\sigma_z+(-1)^j\sin\mu~\sigma_x
\end{aligned}\right\},
\end{align}
where $\alpha:=2/\sqrt{1+\tan^22\theta},~\tan\mu=\sin2\theta$. 

\end{document}